\documentclass[12pt,reqno]{article}

\usepackage{enumitem}
\usepackage[usenames]{color}
\usepackage{amssymb}
\usepackage{amsmath}
\usepackage{amsthm}
\usepackage{amsfonts}
\usepackage{amscd}
\usepackage{graphicx}
\usepackage{diagbox}
\usepackage{mathrsfs}
\usepackage[title]{appendix}

\usepackage{cprotect}

\usepackage[colorlinks=true,
linkcolor=webgreen,
filecolor=webbrown,
citecolor=webgreen]{hyperref}

\definecolor{webgreen}{rgb}{0,.5,0}
\definecolor{webbrown}{rgb}{.6,0,0}

\usepackage{color}
\usepackage{fullpage}
\usepackage{float}

\usepackage{graphics}
\usepackage{latexsym}
\usepackage{epsf}
\usepackage{breakurl}

\newcommand{\seqnum}[1]{\href{https://oeis.org/#1}{\underline{#1}}}

\def\Enn{\mathbb{N}}

\DeclareMathOperator{\ftmfactoreq}{ftmfactoreq}
\DeclareMathOperator{\isftmrs}{isftmrs}
\DeclareMathOperator{\ftmrsn}{ftmrsn}

\begin{document}

\theoremstyle{plain}
\newtheorem{theorem}{Theorem}
\newtheorem{corollary}[theorem]{Corollary}
\newtheorem{lemma}[theorem]{Lemma}
\newtheorem{proposition}[theorem]{Proposition}

\theoremstyle{definition}
\newtheorem{definition}[theorem]{Definition}
\newtheorem{example}[theorem]{Example}
\newtheorem{conjecture}[theorem]{Conjecture}

\theoremstyle{remark}
\newtheorem{remark}[theorem]{Remark}

\title{Subword complexity of the Fibonacci-Thue-Morse sequence:
the proof of Dekking's conjecture}

\author{Jeffrey Shallit\footnote{Supported by NSERC Grant 2018-04118.}\\
School of Computer Science \\
University of Waterloo \\
Waterloo, ON  N2L 3G1 \\
Canada \\
{\tt shallit@uwaterloo.ca} \\
}

\date{October 20 2020}

\maketitle

\begin{abstract}
Recently F. M. Dekking conjectured the form of the subword complexity function
for the Fibonacci-Thue-Morse sequence.  In this note we prove his
conjecture by purely computational means, using the free software
{\tt Walnut}.
\end{abstract}

\section{Introduction}

Recall that the Fibonacci numbers $F_n$ are defined by
$F_0 = 0$, $F_1 = 1$, and $F_n = F_{n-1} + F_{n-2}$.

In this note we will use the Fibonacci (or ``Zeckendorf'') 
representation of natural numbers \cite{Lekkerkerker:1952,Zeckendorf:1972}.
This is a map from $\Enn$ to the set of
binary strings, written $(n)_F$, such that
\begin{itemize}
\item[(a)] If $(n)_F = w$, and $t = |w|$,
then $n = \sum_{1 \leq i \leq t} w[i] F_{t+2-i}$,
\item[(b)] $w[1]$ (if it exists) is $1$;
\item[(c)] $w$ does not have two consecutive $1$'s.
\end{itemize}
The inverse map is written $[w]_F$.   Thus, Fibonacci representation essentially
writes a natural number as a sum of Fibonacci numbers, with no two consecutive
Fibonacci numbers appearing in the sum.

The so-called {\it Fibonacci-Thue-Morse} sequence
$({\bf ftm}[n])_{n \geq 0}$ is defined to be the sum, taken modulo $2$,
of the bits of the Fibonacci representation of $n$.
See, for example, \cite[Examples 7.8.2, 7.8.4]{Allouche&Shallit:2003}
and \cite{Ferrand:2007}, where this sequence is discussed.  It is sequence
\seqnum{A095076} in the {\it On-Line Encyclopedia of Integer Sequences}
(OEIS).  The first few terms of the sequence are as follows:
\begin{center}
\begin{tabular}{c|ccccccccccccccccccccc}
$n$ & 0& 1& 2& 3& 4& 5& 6& 7& 8& 9&10&11&12&13&14&15&16&17&18&19\\
\hline
${\bf ftm}[n]$ &  0&1&1&1&0&1&0&0&1&0&0&0&1&1&0&0&0&1&0&1\\
\end{tabular}
\end{center}

The sequence $\bf ftm$ is {\it Fibonacci-automatic}.   This means
there is a deterministic finite automaton with output (DFAO)
that, on input the Fibonacci representation of
$n$, computes ${\bf ftm}[n]$.  The state name $q$ and the corresponding output
$a$ is written as $q/a$, and the output of the machine is the output of the last
state reached after following the transitions corresponding to the input,
starting from the initial state $q_0$.   In fact, it is easy to see that
this DFAO has $4$ states, as follows:
\begin{figure}[H]
\vskip -.5in
\begin{center}
\includegraphics[width=5in]{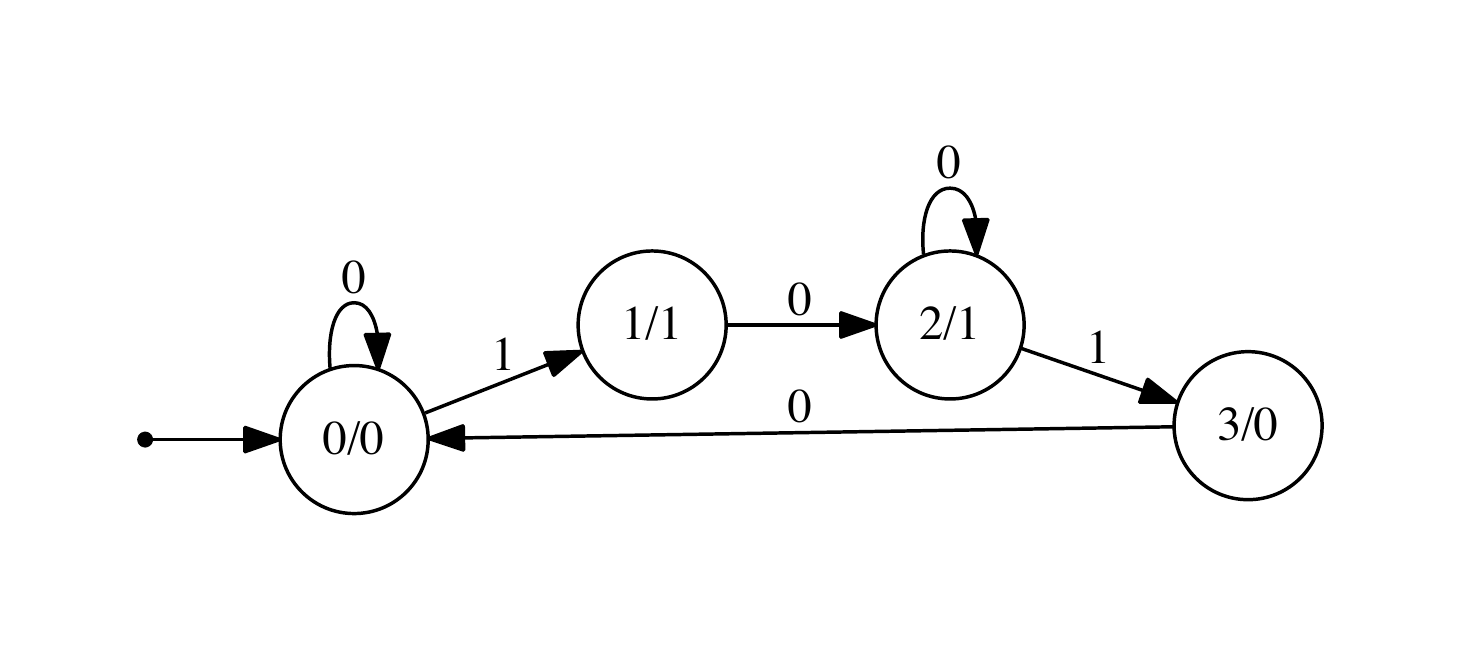}
\end{center}
\vskip -.5in
\caption{Fibonacci automaton for the sequence $\bf ftm$}
\label{fig0}
\end{figure}
\noindent Each state is labeled with the syntax ``name/output''.

The {\it subword complexity} function $\rho_{\bf s}(n)$ of a sequence $\bf s$ is
defined to be the number of distinct length-$n$ contiguous blocks appearing
in $\bf s$.   Recently Dekking \cite{Dekking:2020} gave a conjecture on
the form of $\rho_{\bf ftm} (n)$.  In this note we prove this conjecture.

\section{Dekking's conjecture}

One form of Dekking's conjecture involves the first difference 
$d(n) := \rho_{\bf ftm}(n+1) - \rho_{\bf ftm} (n)$ of the subword complexity
function.  It is well-known, and easy to see, that $d(n)$ counts the number of
{\it right-special} factors, where a factor $x$ is right-special if
both $x0$ and $x1$ are factors of $\bf ftm$.

\begin{conjecture}[Dekking]
\begin{align*}
(d(n))_{n \geq 0} &= 1,2,4,6,10,6,6,8,6,8,8,6,6,6,6,8,8,8,6,6,6, \ldots \\
&= 1,2,4,6,10,6^2,8^1,6^1,8^2,6^4,8^3,6^4,8^5,6^9,8^8,6^{12},8^{13} \ldots \\
&= (1,2,4,6,10) \prod_{i \geq 2} 6^{F_i + (-1)^i}\, 8^{F_i}.
\end{align*}
\end{conjecture}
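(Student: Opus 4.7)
The strategy is a computer-assisted verification carried out entirely in Walnut. Since $\mathbf{ftm}$ is Fibonacci-automatic (Figure~\ref{fig0}), both $d(n)$ and the conjectured right-hand side are expressible in the first-order theory of $\langle \mathbb{N}, +, V_F\rangle$ augmented with $\mathbf{ftm}$, so the conjecture reduces in principle to a single decidability claim. The work is in designing predicates that Walnut can actually process.

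First I would formulate the right-specialness predicates. Let
\[
\mathrm{FactorEq}(n,i,j) \;:=\; \forall k < n,\; \mathbf{ftm}[i+k] = \mathbf{ftm}[j+k],
\]
\[
\mathrm{RS}(n,i) \;:=\; \exists j_0,j_1\; \mathrm{FactorEq}(n,i,j_0)\wedge \mathrm{FactorEq}(n,i,j_1) \wedge \mathbf{ftm}[j_0+n]\neq \mathbf{ftm}[j_1+n],
\]
\[
\mathrm{FRS}(n,i) \;:=\; \mathrm{RS}(n,i) \wedge \forall j < i,\; \neg \mathrm{FactorEq}(n,i,j).
\]
Then $\{i : \mathrm{FRS}(n,i)\}$ picks out exactly one representative per right-special factor of length $n$, so $d(n) = |\{i : \mathrm{FRS}(n,i)\}|$. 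Since $\mathbf{ftm}$ is Fibonacci-automatic, each of these predicates is Fibonacci-synchronizable and can be built in Walnut step by step.

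Second I would use Walnut to prove that $d(n) \in \{6,8\}$ for all $n \geq 5$; the initial values $d(0),\ldots,d(4) = 1,2,4,6,10$ are checked by direct finite computation. Because a bounded Fibonacci-regular sequence is Fibonacci-automatic, this gives an explicit DFAO $A$ with $A((n)_F) = d(n)$. Third, I would construct a second DFAO $B$ that outputs the predicted value from the conjecture. The partial sums
\[
S_k \;=\; 5 + \sum_{i=2}^{k-1} \bigl(2F_i + (-1)^i\bigr)
\]
are Fibonacci-synchronized in $k$, so the predicates ``$n$ lies in the $6$-part of block $k$'' and ``$n$ lies in the $8$-part of block $k$'' are first-order definable, yielding $B$. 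The conjecture then reduces to the Walnut assertion $\forall n\; A((n)_F) = B((n)_F)$.

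The hard part will not be conceptual but combinatorial: the predicate $\mathrm{FRS}$ nests two universal and two existential quantifiers over a Fibonacci-automatic binary relation, and intermediate automata in Walnut can blow up dramatically before the final minimization. I expect the delicate engineering to lie in choosing an efficient order of quantifier elimination, possibly by first extracting $d$'s DFAO through a counting construction on $\mathrm{RS}$ rather than via the more expensive $\mathrm{FRS}$, and in synchronizing the block-index arithmetic cleanly enough that the final equivalence check with $B$ terminates.
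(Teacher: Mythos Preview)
Your proposal is correct and shares the paper's overall architecture: build the predicates $\mathrm{FactorEq}$, $\mathrm{RS}$, $\mathrm{FRS}$ exactly as the paper does, turn the counting function $d(n)$ into a Fibonacci-DFAO, and then verify the conjectured pattern against that DFAO in Walnut.

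There are two places where your route diverges from the paper's. First, for converting $\mathrm{FRS}$ into a DFAO for $d(n)$, the paper does \emph{not} rely on the first-order expressibility of ``$|\{i:\mathrm{FRS}(n,i)\}|=k$'' for small $k$; instead it extracts a linear representation $(v,\gamma,w)$ of rank~46 for $d(n)$ from the 46-state $\mathrm{FRS}$ automaton and then applies the ``semigroup trick'' to show the matrix semigroup is finite, yielding a 75-state DFAO (17 states after minimization). Your alternative---proving $d(n)\le 10$ and then writing ``$d(n)=k$'' with $k$ nested existentials---is legitimate, but those formulas stack many quantifiers on top of an already heavy $\mathrm{FRS}$, and the paper's route sidesteps that blow-up entirely. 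Second, for the right-hand side you propose synchronizing the partial sums $S_k$; the paper instead reformulates the block structure as four explicit interval families $F_{2m+1}+2\le n\le F_{2m+1}+F_{2m-2}+1$, etc., and encodes each directly with tiny regular expressions (\texttt{oddfib}, \texttt{evenfib}, \texttt{fib1001}, \texttt{consecfib}) before checking each interval against the DFAO. This avoids having to synchronize a running sum in $k$ and keeps every intermediate automaton small. Both differences are tactical rather than conceptual: your plan would succeed, but the paper's choices are what make the computation finish comfortably.
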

Here the exponents on the numbers denote repetition factors, and the
$\prod$ symbol denotes concatenation.

\section{Outline of the proof}

We can prove Dekking's conjecture via purely computational means, using
a decision procedure implemented in the free
software {\tt Walnut} \cite{Mousavi:2016}.

There are five steps in the proof.

\begin{enumerate}
\item We create a first-order logical formula specifying that
the length-$n$ factor ${\bf ftm}[i..i+n-1]$ beginning at position $i$
is a right-special factor.

\item Using the ideas in \cite{Mousavi&Schaeffer&Shallit:2016}, we can create a deterministic finite
automaton (DFA) accepting the Fibonacci representation of the
pairs $(i,n)$ such that
\begin{itemize}
\item[(a)] ${\bf ftm}[i..i+n-1]$ is a right-special
factor; and
\item[(b)] ${\bf ftm}[i..i+n-1]$ is the earliest occurrence
of that factor in $\bf ftm$.
\end{itemize}

\item Using the ideas in \cite{Du&Mousavi&Schaeffer&Shallit:2016},
we can read off, from the automaton in the previous step, a
{\it linear representation\/} for $d(n)$.   This consists
of vectors $v, w$ and a matrix-valued morphism $\gamma$ such
that $d(n) = v \gamma( (n)_F ) w$ for all $n \geq 0$.
The {\it rank\/} of the linear representation is the size of
$v$.

\item Using the so-called ``semigroup trick'' 
\cite{Du&Mousavi&Schaeffer&Shallit:2016}, we can prove
that the semigroup generated by the two matrices
$\gamma(0), \gamma(1)$ is finite.   Furthermore, we can
create a Fibonacci-DFAO computing $d(n)$.  

\item This DFAO can be minimized using
the standard methods \cite{Hopcroft&Ullman:1979},
and Dekking's result can
be read off the result.
\end{enumerate}

\section{Details of the proof}

Now we provide a few more details about each step.

\begin{enumerate}

\item The logical formula is created in several steps.  First, we need
a formula asserting that the factor ${\bf ftm}[i..i+n-1]$ is equal to the
factor ${\bf ftm}[j..j+n-1]$.
One obvious way to write this would be
\begin{equation}
\forall t \ (t<n) \implies {\bf ftm}[i+t]= {\bf ftm}[j+t].
\label{first}
\end{equation}
However, for computational purposes, it turns out to be more efficient to formulate this assertion somewhat differently:
$$ \ftmfactoreq(i,n) :=  \forall t,u\ (t \geq i \ \wedge\ t<i+n \ \wedge\ i+u=t+j) \implies {\bf ftm}[t]= {\bf ftm} [u] .$$
It is easy to see this is logically equivalent to \eqref{first}.

Next, we create a formula asserting that ${\bf ftm}[i..i+n-1]$ is
right-special:
$$
\isftmrs(i,n) := \exists j,k \ \ftmfactoreq(i,j,n) \ \wedge\ \ftmfactoreq(i,k,n) 
\ \wedge\ {\bf ftm}[j+n] \not= {\bf ftm}[k+n] .$$
Finally, we create a formula asserting that ${\bf ftm}[i..i+n-1]$
is right-special and also is the first occurrence of that factor:
$$ \ftmrsn(i,n) := \isftmrs(i,n) \ \wedge\  \forall j\ (j<i) 
\implies \neg \ftmfactoreq(i,j,n) .$$

\item We can translate 
the first-order formulas above into automata using the {\tt Walnut} system.
The created automata recognize the Fibonacci representations of the values of
the free variables in the formulas that make the formula true.
We do this with the following {\tt Walnut} commands, which are more
or less verbatim translations of the predicates above.
\begin{verbatim}
def ftmfactoreq "?msd_fib At,u (t>=i & t<i+n & i+u=t+j) => FTM[t]=FTM[u]":
def isftmrs "?msd_fib Ej,k $ftmfactoreq(i,j,n) & $ftmfactoreq(i,k,n) 
     & FTM[j+n] != FTM[k+n]":
def ftmrsn "?msd_fib $isftmrs(i,n) &  Aj (j<i) => ~$ftmfactoreq(i,j,n)":
\end{verbatim}
The automata that are created by {\tt Walnut} are DFA's (deterministic finite automata).
The DFA for {\tt ftmfactoreq} has 91 states, the
DFA for {\tt isftmrs} has 45 states, and the
DFA for {\tt ftmrsn} has 46 states.

\item We can create the linear representation with the {\tt Walnut}
command
\begin{verbatim}
eval ftmatrix n "?msd_fib $ftmrsn(i,n)":
\end{verbatim}
This gives us a linear representation $(v, \gamma, w)$ of rank $46$, which is
stored in the file {\tt ftmatrix.mpl}.

For technical reasons,
this representation now needs to be adjusted by replacing
$v$ with $v \gamma(000)$.  

\item The ``semigroup trick'' gives a DFAO with 75 states.

\item  After minimization, the Fibonacci-DFAO produced is below.
\begin{figure}[H]
\begin{center}
\includegraphics[width=6.5in]{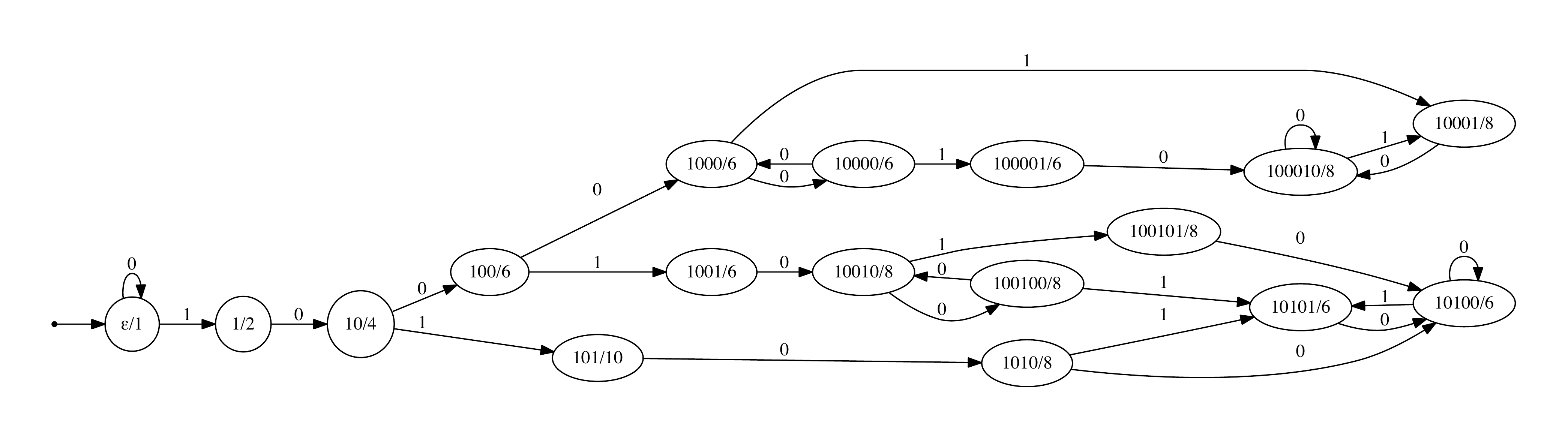}
\end{center}
\caption{Fibonacci-DFAO computing $d(n)$}
\label{fig1}
\end{figure}

\end{enumerate}

\section{Deducing Dekking's result from the automaton}

\begin{theorem}
We claim that 
\begin{align*}
d(n) &= 8 \text{ for } F_{2m+1} + 2 \leq n \leq F_{2m+1} + F_{2m-2} + 1 \\
d(n) &= 6 \text{ for } F_{2m+1} + F_{2m-2} + 2 \leq n \leq F_{2m+2} \\
d(n) &= 8 \text{ for } F_{2m+2} + 1 \leq n \leq F_{2m+2} + F_{2m-1} \\
d(n) &= 6 \text{ for } F_{2m+2} + F_{2m-1} + 1 \leq n \leq F_{2m+3} + 1
\end{align*}
for $m \geq 2$.
\end{theorem}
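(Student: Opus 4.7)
The plan is to verify the four equalities directly against the Fibonacci-DFAO of Figure~\ref{fig1}, which on input $(n)_F$ outputs $d(n)$. Each equality has the form ``for all $m \geq 2$ and all $n$ in a Fibonacci-defined interval, $d(n)$ equals $6$ or $8$,'' and since $d$ is now Fibonacci-automatic, every such first-order statement is decidable via Walnut.

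First I would encode the five Fibonacci numbers $F_{2m-2}, F_{2m-1}, F_{2m+1}, F_{2m+2}, F_{2m+3}$ that appear in the endpoints. Saying ``$x$ is a Fibonacci number'' amounts to $(x)_F$ consisting of a single $1$ followed by zeros, and the parity of its subscript is captured by the parity of the length of $(x)_F$. To tie the five numbers to a common index $m$, I would existentially quantify them and impose the appropriate length relations between their Zeckendorf representations, or equivalently chain them via the Fibonacci recurrence $F_{k+1}=F_k+F_{k-1}$, which is directly expressible in Walnut. This yields, for each $k \in \{1,2,3,4\}$, a Walnut predicate $R_k(n)$ that carves out the $k$th range.

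Next, for each case I would issue a query of the form
\begin{verbatim}
eval dekk_case_k "?msd_fib An R_k(n) => D[n]=V_k":
\end{verbatim}
where $D$ denotes the DFAO of Figure~\ref{fig1} and $V_k \in \{6,8\}$ is the claimed value. If Walnut returns TRUE for all four, the theorem is proved.

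The main obstacle is the bookkeeping needed to enforce that the five Fibonacci endpoints share a single parameter $m$, since Walnut must synthesize one automaton tracking several Zeckendorf representations with matched length parities and length differences. An alternative, and perhaps more illuminating, route would be to bypass Walnut and read the theorem directly off the DFAO: classify its states by output, determine the regular set of Zeckendorf representations reaching each state, and verify that these sets coincide with the union of the intervals given in the theorem statement.
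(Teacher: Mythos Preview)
Your proposal is correct and follows essentially the same route as the paper: define Walnut predicates for the four intervals and then test the DFAO's output on each. The paper sidesteps the bookkeeping you flag as the ``main obstacle'' by noticing that the mixed endpoints $F_{2m+1}+F_{2m-2}$ and $F_{2m+2}+F_{2m-1}$ have Zeckendorf representation $1\,0\,0\,1\,0^{*}$, so instead of tracking five Fibonacci numbers tied to a common $m$, it only needs one odd-indexed and one even-indexed Fibonacci number plus a \texttt{consecfib} relation and a regular expression for $1\,0\,0\,1\,0^{*}$.
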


\begin{proof}
We can prove this with {\tt Walnut}, too.
The first step is to take the DFAO given in Figure~\ref{fig1},
and add it to Walnut's {\tt Word Automata Library}.
We'll use the name {\tt FTMD} for this DFAO.  It is given in Appendix A.

Next, we will create individual formulas for each of the intervals
given in the statement of the theorem; we'll call them
{\tt interval1}, $\ldots$, {\tt interval4}.   In order to specify them
we need ways to specify that $x$ is a Fibonacci number, an odd-indexed
Fibonacci number, an even-indexed Fibonacci number, or that two numbers
represent consecutive Fibonacci numbers.   These are done,
respectively, with the formulas {\tt fibo}, {\tt oddfib}, {\tt evenfib},
and {\tt consecfib}.    The first three of these use the standard formalism of
regular expressions \cite[\S 2.5]{Hopcroft&Ullman:1979}; for example,
{\tt evenfib} specifies that the Fibonacci representation must consist of
$1$ followed by an even number of $0$'s, which corresponds to $F_{2n}$ for
some $n$.  The last is more complicated; basically, it asserts that $t$ and $u$
are both Fibonacci numbers, and there is no Fibonacci number strictly between
$t$ and $u$.
We also need a formula specifying that
$x$ is of the form $F_n + F_{n-3}$; this is done with a regular expression
in {\tt fib1001}.   This gives the following formulas.
\begin{verbatim}
reg fibo msd_fib "0*10*":
reg oddfib msd_fib "0*10(00)*":
reg evenfib msd_fib "0*1(00)*":
def consecfib "?msd_fib (t<u) & $fibo(t) & $fibo(u) & Av (t<v & v<u) => ~$fibo(v)":
reg fib1001 msd_fib "0*10010*":
\end{verbatim}

From these we can create formulas for our four kinds of intervals:
\begin{verbatim}
def interval1 "?msd_fib Et,u,y $oddfib(t) & $evenfib(u) & $consecfib(t,u) &
    $fib1001(y) & (t<y) & (y<=u) & (t+2<=n) & (n<=y+1)":
def interval2 "?msd_fib Et,u,y $oddfib(t) & $evenfib(u) & $consecfib(t,u) &
    $fib1001(y) & (t<y) & (y<=u) & (y+2<=n) & (n<=u)":
def interval3 "?msd_fib Ew,x,z $evenfib(x) & $oddfib(z) & $consecfib(x,z) &
    $fib1001(w) & (x<w) & (w<=z) & (x+1<=n) & (n<=w)":
def interval4 "?msd_fib Ew,x,z $evenfib(x) & $oddfib(z) & $consecfib(x,z) &
    $fib1001(w) & (x<w) & (w<=z) & (w+1<=n) & (n<=z+1)":
\end{verbatim}

As a double-check, we can now verify with {\tt Walnut} that our four defined intervals cover all integers $n \geq 7$ with
the formula {\tt all},
\begin{verbatim}
def all "?msd_fib $interval1(n) | $interval2(n) | $interval3(n) | $interval4(n)":
\end{verbatim}
and the resulting automaton accepts all $n \geq 7$.

Finally, we perform four tests, one for each interval, checking each of
the assertions in the statement of the theorem.  Each of these tests
returns {\tt true}, so the theorem is proven.
\begin{verbatim}
eval test1 "?msd_fib An $interval1(n) => FTMD[n]=@8":
eval test2 "?msd_fib An $interval2(n) => FTMD[n]=@6":
eval test3 "?msd_fib An $interval3(n) => FTMD[n]=@8":
eval test4 "?msd_fib An $interval4(n) => FTMD[n]=@6":
\end{verbatim}
\end{proof}

\begin{corollary}
Dekking's conjecture is correct.
\end{corollary}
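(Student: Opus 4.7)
The plan is to derive Dekking's product form directly from the piecewise description in the theorem above, treating the initial segment of length seven separately. First I would verify by inspection of the DFAO in Figure~\ref{fig1} (or, equivalently, by a small Walnut evaluation) that $d(0),\ldots,d(4)=1,2,4,6,10$ and $d(5)=d(6)=6$. These seven values account for the prefix $(1,2,4,6,10)$ together with the leading factor $6^{F_2+(-1)^2}=6^2$ of Dekking's product, which the theorem does not cover.

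Next, for each fixed $m\ge 2$, I would combine the four intervals from the theorem and recognize them as four consecutive factors of that product. Their lengths, computed as $F_{2m-2}$, $F_{2m+2}-F_{2m+1}-F_{2m-2}-1$, $F_{2m-1}$, and $F_{2m+3}-F_{2m+2}-F_{2m-1}+1$, simplify via the Fibonacci recurrence to $F_{2m-2}$, $F_{2m-1}-1$, $F_{2m-1}$, and $F_{2m}+1$, respectively, while the associated values are $8,6,8,6$. Writing $-1=(-1)^{2m-1}$ and $+1=(-1)^{2m}$, the contribution of the four intervals is exactly
\[8^{F_{2m-2}}\cdot 6^{F_{2m-1}+(-1)^{2m-1}}8^{F_{2m-1}}\cdot 6^{F_{2m}+(-1)^{2m}},\]
which is precisely the 8-tail of the $i=2m-2$ pair, the complete $i=2m-1$ pair, and the 6-head of the $i=2m$ pair in $\prod_{i\ge 2}6^{F_i+(-1)^i}8^{F_i}$.

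The ranges for consecutive $m$ abut, since the interval for $m$ covers $[F_{2m+1}+2,F_{2m+3}+1]$ and the interval for $m+1$ begins at $F_{2m+3}+2$; combined with the Walnut verification (via the formula \texttt{all}) that the intervals exhaust all $n\ge 7$, this telescoping accounts for every position from $n=7$ onward. Concatenating with the verified prefix reproduces $(1,2,4,6,10)\prod_{i\ge 2}6^{F_i+(-1)^i}8^{F_i}$, matching the conjectured expression exactly.

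The only real obstacle is the Fibonacci bookkeeping that matches interval endpoints to block exponents; none of it is subtle, since all of the decidable content has already been discharged by Walnut in the proof of the theorem. If one preferred to eliminate even this clerical step, an alternative would be to encode both candidate descriptions of $d(n)$ (the \texttt{FTMD} automaton on one side, a Fibonacci-automatic description of the product form on the other) and let Walnut verify their equivalence directly, collapsing the entire deduction into a single decidability check.
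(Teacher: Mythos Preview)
Your proposal is correct and is essentially what the paper intends: in the paper the corollary is stated with no proof at all, the implication from the theorem (together with the seven initial values read off the DFAO) being left to the reader. You have simply written out that clerical Fibonacci bookkeeping explicitly, and your interval-length computations and the telescoping identification with the blocks $6^{F_i+(-1)^i}8^{F_i}$ are all accurate.
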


\section{Going further}

It should be possible to prove similar results for the analogues of the Fibonacci-Thue-Morse sequence for other kinds of numeration systems, such as those built from the
recurrence $E_n = k E_{n-1} + E_{n-2}$ for an integer $k \geq 3$.

\begin{appendices}
\section{{\tt Walnut} file {\tt FTMD.txt}}

{\scriptsize
\begin{verbatim}
msd_fib
0 1
0 -> 0
1 -> 1
1 2
0 -> 2
2 4
0 -> 3
1 -> 4
3 6
0 -> 5
1 -> 6
4 10
0 -> 7
5 6
0 -> 8
1 -> 9
6 6
0 -> 10
7 8
0 -> 11
1 -> 12
8 6
0 -> 5
1 -> 13
9 8
0 -> 14
10 8
0 -> 15
1 -> 16
11 6
0 -> 11
1 -> 12
12 6
0 -> 11
13 6
0 -> 14
14 8
0 -> 14
1 -> 9
15 8
0 -> 10
1 -> 12
16 8
0 -> 11
\end{verbatim}
}
\end{appendices}


\begin{thebibliography}{9}

\bibitem{Allouche&Shallit:2003}
J.-P. Allouche and J.~O. Shallit, {\em Automatic Sequences}, Cambridge
  University Press, 2003.

\bibitem{Dekking:2020} F. M. Dekking, The structure of Zeckendorf
representations and base $\varphi$ expansions,
talk for the One World Seminar on Combinatorics on Words,
October 19 2020.  Available at
\url{http://www.i2m.univ-amu.fr/wiki/Combinatorics-on-Words-seminar/_media/seminar2020:slides20201019dekking.pdf}.

\bibitem{Du&Mousavi&Schaeffer&Shallit:2016}
C.~F. Du, H.~Mousavi, L.~Schaeffer, and J.~Shallit.
\newblock Decision algorithms for {Fibonacci}-automatic words {III}:
  Enumeration and abelian properties.
\newblock {\em Internat. J. Found. Comp. Sci.} {\bf 27} (2016), 943--963.

\bibitem{Ferrand:2007} 
E. Ferrand, An analogue of the Thue-Morse sequence, 
{\it Electronic J. Combinatorics} {\bf 14} (2007), Paper R30.
Available at \url{https://doi.org/10.37236/948}.

\bibitem{Hopcroft&Ullman:1979}
John~E. Hopcroft and Jeffrey~D. Ullman.
\newblock {\em Introduction to Automata Theory, Languages, and Computation}.
\newblock Addison-Wesley, 1979.

\bibitem{Lekkerkerker:1952}
C.~G. Lekkerkerker.
\newblock Voorstelling van natuurlijke getallen door een som van getallen van
  {Fibonacci}.
\newblock {\em Simon Stevin} {\bf 29} (1952), 190--195.

\bibitem{Mousavi:2016}
H.~Mousavi.
\newblock Automatic theorem proving in {{\tt Walnut}}.
\newblock Available at \url{http://arxiv.org/abs/1603.06017}, 2016.

\bibitem{Mousavi&Schaeffer&Shallit:2016}
H.~Mousavi, L.~Schaeffer, and J.~Shallit.
\newblock Decision agorithms for {Fibonacci-automatic} words, {I: Basic
  results}.
\newblock {\em RAIRO Inform. Th\'eor. App.} {\bf 50} (2016), 39--66.

\bibitem{Zeckendorf:1972}
E.~Zeckendorf.
\newblock {Repr\'esentation} des nombres naturels par une somme de nombres de
  {Fibonacci} ou de nombres de {Lucas}.
\newblock {\em Bull. Soc. Roy. {Li\`ege}} {\bf 41} (1972), 179--182.

\end{thebibliography}
\end{document}